\newtheorem{theorem}{Theorem}
\crefname{lemma}{lemma}{lemmas}
\Crefname{lemma}{Lemma}{Lemmas}
\crefname{assumption}{assumption}{assumptions}
\Crefname{assumption}{Assumption}{Assumptions}
\crefname{definition}{definition}{definitions}
\Crefname{definition}{Definition}{Definitions}
\def\input@path{{sections/}}
\title{\LARGE \bf Scalable Co-Clustering for Large-Scale Data through Dynamic Partitioning and Hierarchical Merging}
\author{Zihan Wu$^{1}$, Zhaoke Huang$^{2}$, and Hong Yan$^{3}$
\thanks{This work is supported by Hong Kong Innovation and
Technology Commission (InnoHK Project CIMDA) and Hong
Kong Research Grants Council (Project CityU 11204821).}
\thanks{$^{1}$Zihan Wu (Corresponding Author) is with the Department of Electrical Engineering,
City University of Hong Kong, Hong Kong
{\tt\small zihan.wu@my.cityu.edu.hk}}%
\thanks{$^{2}$Zhaoke Huang is with the Department of Electrical Engineering,
City University of Hong Kong, Hong Kong
{\tt\small Z.Huang@cityu.edu.hk}}%
\thanks{$^{3}$Hong Yan is with the Department of Electrical Engineering,
City University of Hong Kong, Hong Kong
{\tt\small h.yan@cityu.edu.hk}}%
\thanks{This is the author's accepted version of the paper. The final published version appears in the proceedings of the \textit{2024 IEEE International Conference on Systems, Man, and Cybernetics (SMC 2024)}. 
Published in: \textit{2024 IEEE International Conference on Systems, Man, and Cybernetics (SMC)}
Conference Date: 6--10 October 2024
Added to IEEE Xplore: 20 January 2025 
DOI: \href{https://doi.org/10.1109/SMC54092.2024.10832071}{10.1109/SMC54092.2024.10832071}
\textcopyright\ 2024 IEEE. Personal use of this material is permitted. Permission from IEEE must be obtained for all other uses, including reprinting/republishing this material for advertising or promotional purposes, creating new collective works for resale or redistribution to servers or lists, or reuse of any copyrighted components of this work in other works.}
}
\begin{document}
\maketitle

\thispagestyle{empty}
\pagestyle{empty}


\begin{abstract}
Co-clustering simultaneously clusters rows and columns, revealing more fine-grained groups. However, existing co-clustering methods suffer from poor scalability and cannot handle large-scale data. This paper presents a novel and scalable co-clustering method designed to uncover intricate patterns in high-dimensional, large-scale datasets. Specifically, we first propose a large matrix partitioning algorithm that partitions a large matrix into smaller submatrices, enabling parallel co-clustering. This method employs a probabilistic model to optimize the configuration of submatrices, balancing the computational efficiency and depth of analysis.
Additionally, we propose a hierarchical co-cluster merging algorithm that efficiently identifies and merges co-clusters from these submatrices, enhancing the robustness and reliability of the process. Extensive evaluations validate the effectiveness and efficiency of our method. Experimental results demonstrate a significant reduction in computation time, with an approximate 83\% decrease for dense matrices and up to 30\% for sparse matrices.

\end{abstract}

\section{Introduction}
Artificial Intelligence is a rapidly advancing technology facilitating complex data analysis, pattern recognition, and decision-making processes. Clustering, a fundamental unsupervised learning technique, groups data points based on shared features, aiding in interpreting complex data structures. However, traditional clustering algorithms \cite{zhang2023AdaptiveGraphConvolution, wu2023EffectiveClusteringStructured} treat all features of data uniformly and solely cluster either rows (samples) or columns (features),  as shown in Figure \ref{fig:cluster}. They oversimplified interpretations and overlooked critical context-specific relationships within the data, especially when dealing with large, high-dimensional datasets \cite{chen2023FastFlexibleBipartite, zhao2023MultiviewCoclusteringMultisimilarity, kumar2023CoclusteringBasedMethods}.

\textit{Co-clustering} \cite{kluger2003SpectralBiclusteringMicroarray, yan2017CoclusteringMultidimensionalBig} is a technique that groups rows (samples) and columns (features) simultaneously, as shown in Figure \ref{fig:cocluster}. It can reveal complex correlations between two different data types and is transformative in scenarios where the relationships between rows and columns are as important as the individual entities themselves. For example, in bioinformatics, co-clustering could identify gene-related patterns leading to biological insights by concurrently analyzing genes and conditions \cite{higham2007SpectralClusteringIts, kluger2003SpectralBiclusteringMicroarray, zhao2012BiclusteringAnalysisPattern}. In recommendation systems, co-clustering can simultaneously discover more fine-grained relationships between users and projects \cite{dhillon2007WeightedGraphCuts, chen2023ParallelNonNegativeMatrix}. Co-clustering extends traditional clustering methods, enhancing accuracy in pattern detection and broadening the scope of analyses.

\begin{figure}[htbp]
    \centering
    \begin{subfigure}[b]{0.22\textwidth}
        \includegraphics[width=\linewidth]{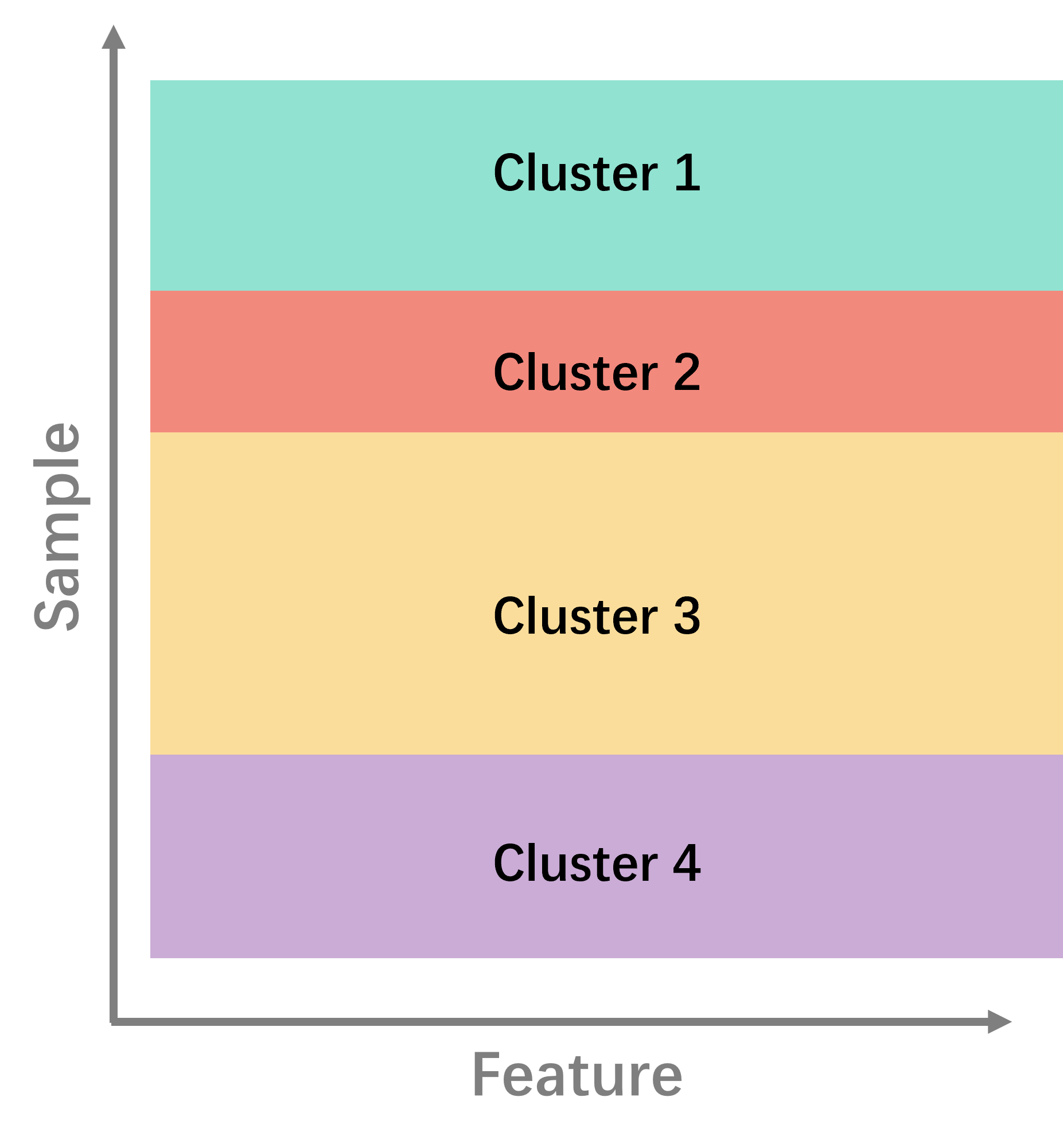}
        \caption{Clustering}
        \label{fig:cluster}
    \end{subfigure}
    \hfill
    \begin{subfigure}[b]{0.22\textwidth}
        \includegraphics[width=\linewidth]{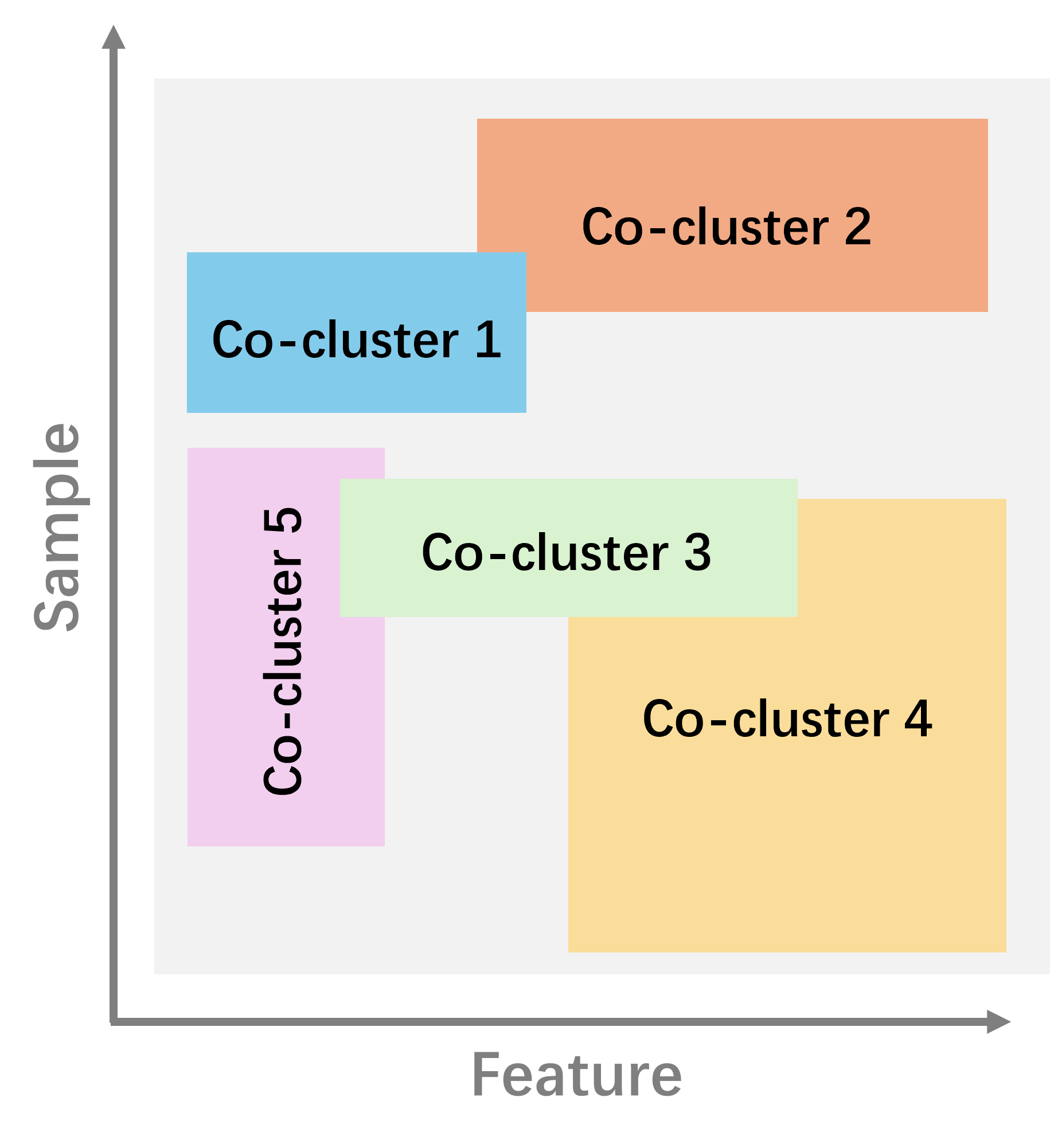}
        \caption{Co-clustering}
        \label{fig:cocluster}
    \end{subfigure}
    \caption{An illustration of the differences between (a) Clustering and (b) Co-clustering \cite{yan2017CoclusteringMultidimensionalBig}.}
    \label{fig:cocomparison}
\end{figure}

Despite its potential, scaling co-clustering to large datasets poses significant challenges:

\begin{itemize}
    \item{\textbf{High Computational Complexity.}} Co-clustering analyzes relationships both within and across the rows and columns of a dataset simultaneously. This dual-focus analysis requires evaluating a vast number of potential relationships, particularly as the dimensions of the data increase. The complexity can grow exponentially with the size of the data because the algorithm must process every possible combination of rows and columns to identify meaningful clusters \cite{hansen2011NonparametricCoclusteringLarge}.
    \item{\textbf{Significant Communication Overhead.}} Even when methods such as data partitioning are used to handle large-scale data, each partition may independently analyze a subset of the data. However, to optimize the clustering results globally, these partitions need to exchange intermediate results frequently. This requirement is inherent to iterative optimization techniques used in co-clustering, where each iteration aims to refine the clusters based on new data insights, necessitating continuous updates across the network. Such extensive communication can become a bottleneck, significantly slowing down the overall processing speed.
    \item{\textbf{Dependency on Sparse Matrices.}} Several traditional co-clustering algorithms are designed to perform best with sparse matrices \cite{pan2008CRDFastCoclusteringa}. However, in many real-world applications, data matrices are often dense, meaning most elements are non-zero. Such scenarios present a significant challenge for standard co-clustering algorithms, as they must handle a larger volume of data without the computational shortcuts available with sparse matrices.
\end{itemize}

To address the inherent challenges associated with existing co-clustering methods, we propose a novel and scalable Large-scale Adaptive Matrix Co-clustering (\textbf{LAMC}) framework designed for large-scale datasets. First,  we propose a large matrix partitioning algorithm that divides the original data matrix into smaller submatrices. This partitioning facilitates parallel processing of co-clustering tasks across submatrices, significantly reducing both processing time and computational and storage demands for each processing unit. We also design a probabilistic model to determine the optimal number and configuration of these submatrices to ensure comprehensive data coverage.
Second, we develop a hierarchical co-cluster merging algorithm that iteratively combines the co-clusters from these submatrices. This process enhances the accuracy and reliability of the final co-clustering results and ensures robust and consistent clustering performance, particularly addressing issues of heterogeneity and model uncertainty.

The contributions of this paper are summarized as follows:
\begin{enumerate}
    \item \textbf{Large Matrix Partitioning Algorithm:}
          We propose a novel matrix partitioning algorithm that enables parallel co-clustering by dividing a large matrix into optimally configured submatrices. This design is supported by a probabilistic model that calculates the optimal number and order of submatrices, balancing computational efficiency with the detection of relevant co-clusters.
    \item \textbf{Hierarchical Co-cluster Merging Algorithm:}
          We design a hierarchical co-cluster merging algorithm that combines co-clusters from submatrices, ensuring the completion of the co-clustering process within a pre-fixed number of iterations. This algorithm significantly enhances the robustness and reliability of the co-clustering process, effectively addressing model uncertainty.
    \item \textbf{Experimental Valuation:}
          We evaluate the effectiveness and efficiency of our method across a wide range of scenarios with large, complex data. Experimental results show an approximate 83\% decrease for dense matrices and up to 30\% for sparse matrices.
\end{enumerate}

The rest of this paper is organized as follows: Section \ref{sec:related_work} reviews related works; Section \ref{sec:formula} presents the problem formulation; Section \ref{sec:method} describes our LAMC method; Section \ref{sec:experiment} reports experimental results; and Section \ref{sec:conclude} concludes the paper.


\section{Related work}
\label{sec:related_work}
\subsection{Co-clustering Methods}
Co-clustering methods, broadly categorized into graph-based and matrix factorization-based approaches, have limitations in handling large datasets. Graph-based methods like Flexible Bipartite Graph Co-clustering (FBGPC) \cite{chen2023FastFlexibleBipartite} directly apply flexible bipartite graph models. Matrix factorization-based methods, such as Non-negative Matrix Tri-Factorization (NMTF) \cite{long2005CoclusteringBlockValue}, decompose data to cluster samples and features separately. Deep Co-Clustering (DeepCC) \cite{dongkuanxu2019DeepCoClustering}, which integrates deep autoencoders with Gaussian Mixture Models, also faces efficiency challenges with diverse data types and large datasets.

\subsection{Parallelizing Co-clustering}

Parallel co-clustering methods have emerged as a vital solution to the challenges of processing big data. The CoClusterD framework by Cheng \textit{et al.} \cite{cheng2015CoClusterDDistributedFramework} utilizes an Alternating Minimization Co-clustering (AMCC) algorithm with sequential updates in a distributed environment. However, this method faces challenges with guaranteed convergence, leading to potential inefficiencies.

While matrix factorization techniques have shown promise for co-clustering large datasets, scaling to massive high-dimensional data remains an open challenge. Chen \textit{et al.}\cite{chen2023ParallelNonNegativeMatrix} proposed a parallel non-negative matrix tri-factorization method that distributes computation across multiple nodes to accelerate factorizations. However, even these advanced methods encounter difficulties with extremely large datasets.

Our proposed method adopts a divide-and-conquer strategy, partitioning the input matrix into smaller submatrices, which are then co-clustered in parallel. This technique reduces the complexity imposed by high dimensionality and combines the results to form the final co-clusters. This novel approach addresses the computational challenges and introduces a scalable solution for big data.

\section{Mathematical Formulation and Problem Statement}\label{sec:formula}
\subsection{Mathematical Formulation of Co-clustering}
Co-clustering groups rows and columns of a data matrix $\mathbf{A} \in \mathbb{R}^{M \times N}$, where $M$ is the number of features and $N$ is the number of samples. Each element $a_{ij}$ represents the $i$-th feature of the $j$-th sample. The goal is to partition $\mathbf{A}$ into $k$ row clusters and $d$ column clusters, creating $k \times d$ homogeneous submatrices $\mathbf{A}_{I, J}$.

When rows and columns are optimally reordered, $\mathbf{A}$ can be visualized as a block-diagonal matrix, where each block represents a co-cluster with higher similarity within than between blocks. We define the row and column label sets as \( u \in \{1,\dots,k\}^M \) and \( v \in \{1,\dots,d\}^N \), respectively. Indicator matrices \( R \in \mathbb{R}^{M \times k} \) and \( C \in \mathbb{R}^{N \times d} \) are used to assign rows and columns to clusters, with the constraints \( \sum_k R_{i,k} = 1 \) and \( \sum_d C_{j,d} = 1 \), ensuring each row and column is assigned to exactly one cluster.

\subsection{Notation Clarification}
Below is a table summarizing the notations used in the mathematical formulations of our scalable co-clustering method.

\begin{table*}[h]
    \centering
    \begin{tabular}{c|p{10cm}}
        \hline
        \textbf{Symbol}        & \textbf{Description}                                                                                                           \\
        \hline
        $\mathbf{A}$           & Data matrix of dimensions $M \times N$, where $M$ is the number of rows (features) and $N$ is the number of columns (samples). \\
        $a_{ij}$               & Element at the $i$-th row and $j$-th column of matrix $\mathbf{A}$.                                                            \\
        $I, J$                 & Indices of rows and columns selected for co-clustering.                                                                        \\
        $\mathbf{A}_{I, J}$    & Submatrix containing the rows indexed by $I$ and columns by $J$.                                                               \\
        $R, C$                 & Indicator matrices for row and column cluster assignments.                                                                     \\
        $\phi_i, \psi_j$       & Block sizes in rows and columns, respectively.                                                                                 \\
        $s_i^{(k)}, t_j^{(k)}$ & Minimum row and column sizes of co-cluster $C_k$ in block $B_{(i,j)}$.                                                         \\
        $P(\omega_k)$          & Probability of failure to identify co-cluster $C_k$.                                                                           \\
        $T_p$                  & Number of sampling times or iterations in the probabilistic model.                                                             \\
        \hline
    \end{tabular}
    \caption{Notations used in the mathematical formulation of co-clustering}
    \label{tab:notations}
\end{table*}

\subsection{Problem Statement}
This paper aims to develop a method that efficiently and accurately identifies co-clusters $\mathbf{A}_{I, J}$ within a matrix $\mathbf{A}$ representing large datasets. These co-clusters should exhibit specific structural patterns such as uniformity across elements, consistency along rows or columns, or patterns demonstrating additive or multiplicative coherence. Properly identifying and categorizing these patterns is crucial for understanding the complex data structures inherent in large datasets. This method is intended to improve the detection capabilities of co-clustering, enhancing both the efficiency and precision necessary for handling large-scale data challenges.

\section{The Scalable Co-clustering Method}
\label{sec:method}
\subsection{Overview}

\begin{figure*}[htbp]
    \centering
    \includegraphics[width=0.8\linewidth]{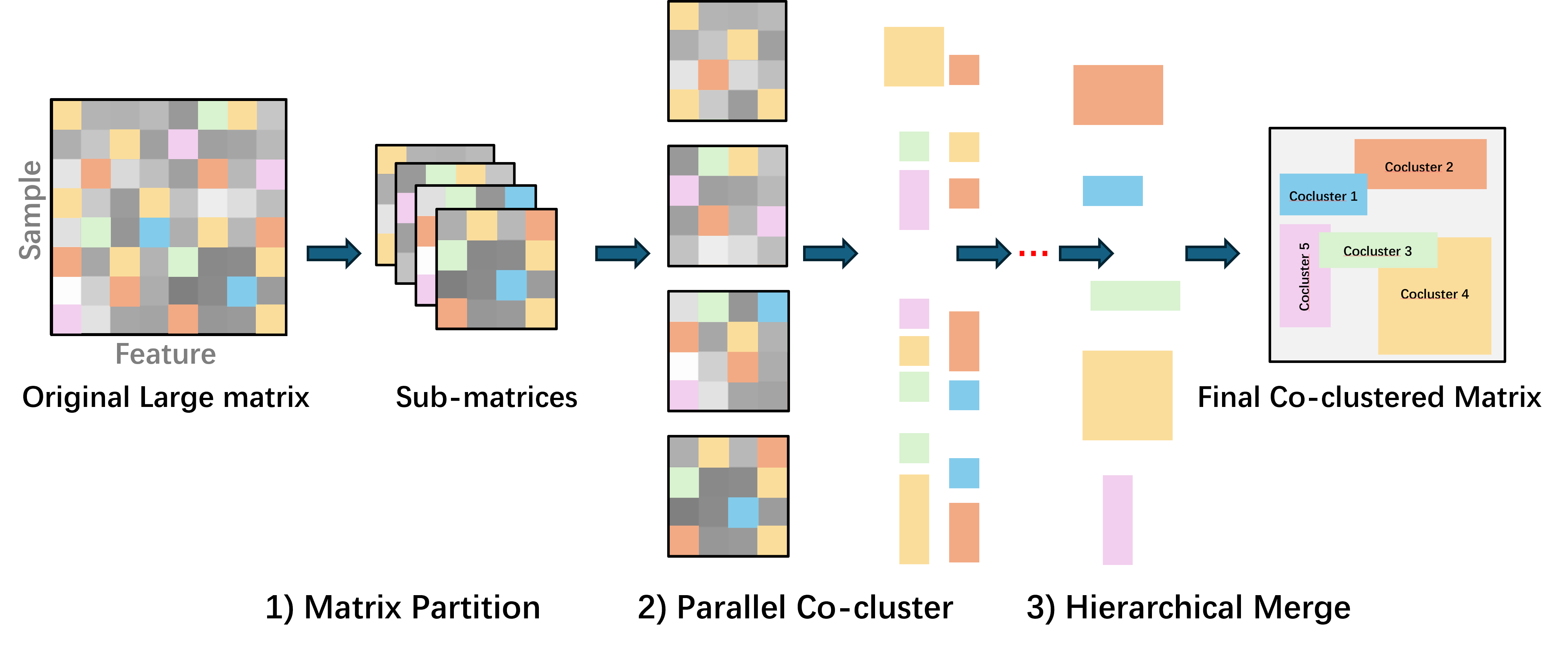}
    \caption{Workflow of our proposed Large-scale Adaptive Matrix Co-clustering for large matrices.}
    \label{fig:workflow}
\end{figure*}
This paper presents a novel and scalable co-cluster method specifically designed for large matrices, as shown in \Cref{fig:workflow}. This method applies a probabilistic model-based optimal partitioning algorithm, which not only predicts the ideal number and sequence of partitions for maximizing computational efficiency but also ensures the effectiveness of the co-clustering process.

Our method involves partitioning large matrices into smaller, manageable submatrices. This strategic partitioning is meticulously guided by our algorithm to facilitate parallel processing. By transforming the computationally intensive task of co-clustering a large matrix into smaller, parallel tasks, our approach significantly reduces computational overhead and enhances scalability.

Following the partitioning, each submatrix undergoes a co-clustering process. This is implemented via the application of Singular Value Decomposition (SVD) and $k$-means clustering on the resulting singular vectors. This pivotal step ensures the adaptability of our method, allowing our algorithm to tailor its approach to the unique characteristics of each submatrix, thus optimizing clustering results.

Furthermore, our method integrates a novel hierarchical merging strategy that combines the co-clustering results from all submatrices. This integration provides more fine-grained insight into each submatrix and enhances the overall accuracy and reliability of the co-clustering results. Our method, validated and optimized through a comprehensive process, showed efficiency in handling large-scale datasets that were never reached before.

\subsection{Large Matrix Partitioning}
The primary challenge in co-clustering large matrices is the risk of losing co-clusters when the matrix is partitioned into smaller submatrices. To address this, we introduce an optimal partitioning algorithm underpinned by a probabilistic model. This model is meticulously designed to navigate the complexities of partitioning, ensuring that the integrity of co-clusters is maintained even as the matrix is divided. The objective of this algorithm is twofold: to determine the optimal partitioning strategy that minimizes the risk of fragmenting significant co-clusters and to define the appropriate number of repartitioning iterations needed to achieve a desired success rate of co-cluster identification.

\subsubsection{Partitioning and Repartitioning Strategy based on the Probabilistic Model}
Our probabilistic model serves as the cornerstone of the partitioning algorithm. It evaluates potential partitioning schemes based on their ability to preserve meaningful co-cluster structures within smaller submatrices. The model operates under the premise that each atom-co-cluster (the smallest identifiable co-cluster within a submatrix) can be identified with a probability $p$. This probabilistic model allows us to estimate the likelihood of successfully identifying all relevant co-clusters across the partitioned submatrices.

In the scenario where the matrix $A$ is partitioned into $m \times n$ blocks, each block has size $\phi_i \times \psi_j$, that is, $M=\sum_{i=1}^m \phi_i$ and $N=\sum_{j=1}^n \psi_j$, the joint probability of $M_{(i,j)}^{(k)}$ and $N_{(i,j)}^{(k)}$ are
\begin{equation}
    \label{eq:joint_probability}
    \begin{split}
        P(M_{(i,j)}^{(k)} & < T_m, N_{(i,j)}^{(k)} < T_n)                                                                           \\
                          & = \sum_{\alpha=1}^{T_m-1} \sum_{\beta=1}^{T_n-1} P(M_{(i,j)}^{(k)} = \alpha) P(N_{(i,j)}^{(k)} = \beta) \\
                          & \le \exp[-2 (s_i^{(k)})^2 \phi_i + -2 (t_j^{(k)})^2 \psi_j]
    \end{split}
\end{equation}
where $s_i^{(k)}$ and $t_j^{(k)}$ are the minimum row and column sizes of co-cluster $C_k$ in block $B_{(i,j)}$, the size of the co-cluster $C_k$ is $M^{(k)} \times N^{(k)}$, and $M^{(k)}$ and $N^{(k)}$ are the row and column sizes of co-cluster $C_k$, respectively.


Thus, the probability of identifying all co-clusters is given by

\begin{equation}
    \begin{split}
        P(\omega_k) & \le \exp \left\{ -2 [\phi m (s^{(k)})^2 + \psi n (t^{(k)})^2] \right\},
    \end{split}
\end{equation}
and
\begin{equation}
    \begin{split}
        P & = 1 - P(\omega_k)^{T_p}                                                                                                       \\
          & \ge 1 - \exp \left\{ -2 T_p [\phi m (s^{(k)})^2 + \psi n (t^{(k)})^2] \right\} \label{eq:prob_of_identifying_all_co_clusters}
    \end{split}
\end{equation}
where $P(\omega_k)$ is the probability of the failure of identifying co-cluster $C_k$, $T_p$ is the number of sampling times, $\phi$ and $\psi$ are the row and column block sizes, and $s^{(k)}$ and $t^{(k)}$ are the minimum row and column sizes of co-cluster $C_k$.

\Cref{eq:prob_of_identifying_all_co_clusters} is central to our algorithm for partitioning large matrices for co-clustering, providing a probabilistic model that informs and optimizes our partitioning strategy to preserve co-cluster integrity. It mathematically quantifies the likelihood of identifying all relevant co-clusters within partitioned blocks, guiding us to mitigate risks associated with partitioning that might fragment vital co-cluster relationships.

Based on \eqref{eq:prob_of_identifying_all_co_clusters}, we can establish a constraint between the repartitioning time $Tr$ and the partition solution $Part$, ensuring that the partitioning strategy adheres to a predetermined tolerance success rate, thereby minimizing the risk of co-cluster fragmentation. The constraints are discussed in appendix due to space limitations.

\subsubsection{Optimization and Computational Efficiency}
Optimizing the partitioning process for computational efficiency is paramount in both academic and industrial applications, where running time often serves as the primary bottleneck. Thanks to the flexible framework established by our probabilistic model and the constraints derived in \Cref{thm:probability_co_cluster_detection}, our optimization strategy can be tailored to address the most critical needs of a given context. In this case, we focus on minimizing the running time without compromising the integrity and success rate of co-cluster identification.

Our approach to optimization leverages the probabilistic model to assess various partitioning configurations, balancing the trade-off between computational resource allocation and the need to adhere to theoretical success thresholds. By systematically evaluating the impact of different partitioning schemes on running time, we can identify strategies that not only meet our co-clustering success criteria but also optimize the use of computational resources.

To ensure that our optimization does not sacrifice the quality of co-cluster identification for the sake of efficiency, we introduce a set of conditions under which optimization can be achieved without compromising the success rate of co-cluster discovery. These conditions provide a mathematical basis for optimizing the partitioning algorithm in a manner that maintains a balance between computational efficiency and the fidelity of co-cluster identification.

Under the constraint of maintaining a predetermined success rate \(P\) for co-cluster identification, the optimization of the partitioning algorithm with respect to running time must satisfy the following condition:
\begin{equation}
    \label{eq:optimization_condition}
    \begin{split}
        T_p = \text{argmin}_{T_p} \{
        1 & - \exp \{ -2 T_p [\phi m (s^{(k)})^2         \\
          & + n (t^{(k)})^2] \} \ge P_{\text{thresh}} \}
    \end{split}
\end{equation}

This condition delineates the parameters within which the partitioning strategy can be optimized for speed without detracting from the algorithm's ability to accurately identify co-clusters. By adhering to these conditions, we ensure that our optimization efforts align with the overarching goal of preserving the integrity and effectiveness of co-cluster discovery. This balance is crucial for developing a partitioning algorithm that is not only fast and efficient but also robust and reliable across various data sets and co-clustering challenges.

\subsection{Co-clustering on Small Submatrices}

\subsubsection{Atom-co-clustering Algorithm}
Our framework, which encompasses both partitioning and ensembling, offers remarkable flexibility, allowing it to be compatible with a wide range of atom-co-clustering methods. For the sake of making this paper comprehensive and self-contained, we provide an introduction to the atom-co-cluster method herein. The only requirement for an atom-co-clustering method to be compatible with our framework is that it must be able to identify co-clusters under a given criterion with a probability $p$, or more relaxed conditions, has a lower bound estimate of the probability of identifying co-clusters equipped with a validation mechanism.

\subsubsection{Graph-based Spectral Co-clustering Algorithm}

Spectral co-clustering (SCC) stands as one of the most prevalent methods in the realm of co-clustering today\cite{vonluxburg2007TutorialSpectralClustering}, primarily due to its adeptness in unraveling the complexities of high-dimensional data. At its core, this method harnesses the power of spectral clustering principles, focusing on the utilization of a graph's Laplacian matrix eigenvectors for effectively partitioning data into distinct clusters. This approach is exceptionally beneficial for analyzing data that naturally forms a bipartite graph structure, including applications in text-document analysis, social network modeling, and gene expression studies.

\paragraph{Graph Construction in Co-clustering Expanded}

SCC begins with constructing a bipartite graph $G=(U,V,E)$. Here, $U$ and $V$, both as vertex sets, symbolize the sets corresponding to the rows and columns of the data matrix, respectively. The edges $E$ of this graph are assigned weights reflecting the relationships between rows and columns. Consequently, the graph's weighted adjacency matrix $W$ is defined as:

\begin{equation} W = \begin{bmatrix} 0 & A \\ A^T & 0 \end{bmatrix}, \end{equation}

where $A$ denotes the data matrix, also called adjacency matrix in the graph context.
Through this representation, the challenge of co-clustering is reformulated into a graph partitioning task, aiming to segregate the graph into distinct clusters based on the interrelations between the data matrix's rows and columns.

\paragraph{Laplacian Matrix}

The graph's Laplacian matrix $L$ is computed as $L=D-W$, with $D$ being the graph's degree matrix—a diagonal matrix whose entries equal the sum of the weights of the edges incident to each node. The Laplacian matrix plays a crucial role in identifying the graph's cluster structure. It does so by facilitating the calculation of eigenvectors associated with its smallest positive eigenvalues, which in turn, guide the partitioning of the graph into clusters.

\paragraph{Graph Partitioning and Singular Value Decomposition}
Theorem 4 in \cite{dhillon2001CoclusteringDocumentsWords} states that the eigenvector corresponding to the second smallest eigenvalue of the following eigenvalue problem gives the generalized partition vectors for the graph:

\begin{equation}
    L \mathbf{v} = \lambda D \mathbf{v}
    \label{eq:eigenvalue_problem}
\end{equation}

And according to Section 4 of \cite{dhillon2001CoclusteringDocumentsWords}, the singular value decomposition of the normalized matrix $A_n = D^{-1/2} A D^{-1/2}$
\begin{equation}
    \begin{split}
        A_n = U \Sigma V^T
    \end{split}
\end{equation}
gives the solution to \eqref{eq:eigenvalue_problem}. To be more specific, the singular vectors corresponding to the second largest singular value of $A_n$ is the eigenvector corresponding to the second smallest eigenvalue of \eqref{eq:eigenvalue_problem}.

The above discussion is under the assumption that the graph has only one connected component. In a more general setting, $\mathbf{u_2}, \mathbf{u_3}, \ldots, \mathbf{u_{l+1}}$ and $\mathbf{v_2}, \mathbf{v_3}, \ldots, \mathbf{v_{l+1}}$ reveal the $k$-modal information of the graph, where $\mathbf{u_k}$ and $\mathbf{v_k}$ are the $k$-th left and right singular vectors of $A_n$, respectively.
And for the last step,
\begin{equation} Z = \begin{bmatrix} D_1^{-1/2} \hat{U} \\ D_2^{-1/2} \hat{V} \end{bmatrix} \end{equation}

is stacked where $\hat{U} = [\mathbf{u_2}; \mathbf{u_3}; \ldots; \mathbf{u_{l+1}}]$ and $\hat{V} = [\mathbf{v_2}; \mathbf{v_3}; \ldots; \mathbf{v_{l+1}}]$. The approximation to the graph partitioning optimization problem is then solved by applying a $k$-means algorithm to the rows of $Z$. More details can be found in \cite{dhillon2001CoclusteringDocumentsWords}.

\subsection{Hierarchical Co-cluster Merging}

Hierarchical co-cluster merging is a novel approach that combines the results of co-clustering on submatrices to produce a final co-clustered result.
The merging method is designed to enhance the accuracy and robustness of the co-clustering outcome by leveraging the design of the partitioning algorithm. The hierarchical merging process iteratively combines the co-clusters from each submatrix, ensuring that the final co-clustered result is comprehensive and consistent across all submatrices. This iterative merging process is crucial for addressing issues of heterogeneity and model uncertainty, ensuring that the final co-clustering results are reliable and robust.

\subsection{Algorithmic Description}
Our proposed  Optimal Matrix Partition and Hierarchical Co-cluster Merging Method is outlined in Algorithm \ref{alg:method}. The algorithm
is an advanced algorithm designed for efficient co-clustering of large data matrices. The algorithm begins by initializing a block set based on predetermined block sizes. For each co-cluster in the given set, the algorithm calculates specific values $s^{(k)}$ and $t^{(k)}$, which are then used to determine the probability $P(\omega_k)$ of each co-cluster. If this probability falls below a predefined threshold $P_{\text{thresh}}$, the algorithm partitions the data matrix $A$ into blocks $B$ and performs co-clustering on these blocks. This step is crucial for managing large datasets by breaking them down into smaller, more manageable units. After co-clustering, the results from each block are aggregated to form the final co-clustered result $\mathcal{C}$. The algorithm's design allows for a flexible and efficient approach to co-clustering, particularly suited to datasets with high dimensionality and complexity.

\begin{algorithm}[!t]
    \caption{Optimal Matrix Partition and Hierarchical Co-cluster Merging Method}\label{alg:method}
    \KwIn{Data matrix $A \in \mathbb{R}^{M \times N}$, Co-cluster set $C = \{C_k\}_{k=1}^K$, Block sizes $\{\phi_i\}_{i=1}^m$, $\{\psi_j\}_{j=1}^n$, Thresholds $T_m$, $T_n$, Sampling times $T_p$, Probability threshold $P_\text{thresh}$;}
    \KwOut{Co-clustered result $\mathcal{C}$;}
    Initialize block set $B = \{B_{(i,j)}\}_{i=1}^m,_{j=1}^n$ based on $\phi_i$ and $\psi_j$\;
    Calculate $s^{(k)}$ and $t^{(k)}$ for each co-cluster $C_k$\;
    \For{$k=1$ \KwTo $K$}{
        Calculate $P(\omega_k)$ for co-cluster $C_k$\;
        \If{$P(\omega_k) < P_\text{thresh}$}{
            Partition matrix $A$ into blocks $B$ and perform co-clustering\;
            Aggregate co-clustered results from each block\;
        }
    }
    \Return Aggregated co-clustered result $\mathcal{C}$\;
\end{algorithm}


\begin{table*}[h!]
    \centering
    \caption{Comparison of Running Times (in seconds) for Various Co-clustering Methods on Selected Datasets.}
    \label{tab:running-time}
    \begin{tabular}{@{} l cccc @{}}
        \toprule
        Dataset     & SCC \cite{dhillon2001CoclusteringDocumentsWords} & PNMTF \cite{chen2023ParallelNonNegativeMatrix} & \textbf{LAMC-SCC} & \textbf{LAMC-PNMTF} \\
        \midrule
        Amazon 1000 & 64545.2                                          & 303.7                                          & 112.5             & 242.8               \\
        CLASSIC4    & *                                                & 17,810                                         & 22,894            & 3,028               \\
        RCV1-Large  & *                                                & 277,092                                        & *                 & 208,048             \\
        \bottomrule
    \end{tabular}
    \begin{tablenotes}
        \small
        \item Notes: * indicates that the method cannot process the dataset because the dataset size exceeds the processing limit.
    \end{tablenotes}
\end{table*}

\begin{table*}[htbp]
    \centering
    \caption{NMIs and ARIs Scores for Various Co-clustering Methods on Selected Datasets.}
    \label{tab:evaluation-metrics}
    \begin{tabular}{@{} l c cccc @{}}
        \toprule
        \multirow{2}{*}{Dataset}     & \multirow{2}{*}{Metric} & \multicolumn{4}{c}{Compared Methods}                                                                                                        \\
        \cmidrule{3-6}
                                     &                         & SCC \cite{dhillon2001CoclusteringDocumentsWords} & PNMTF \cite{chen2023ParallelNonNegativeMatrix} & \textbf{LAMC-SCC} & \textbf{LAMC-PNMTF} \\
        \midrule
        \multirow{2}{*}{Amazon 1000} & NMI                     & 0.9223                                           & 0.6894                                         & 0.8650            & 0.6609              \\
                                     & ARI                     & 0.7713                                           & 0.6188                                         & 0.7763            & 0.6057              \\
        \multirow{2}{*}{CLASSIC4}    & NMI                     & *                                                & 0.5944                                         & 0.7676            & 0.6073              \\
                                     & ARI                     & *                                                & 0.4523                                         & 0.5845            & 0.4469              \\
        \multirow{2}{*}{RCV1-Large}  & NMI                     & *                                                & 0.6519                                         & 0.8349            & 0.6348              \\
                                     & ARI                     & *                                                & 0.5383                                         & 0.7576            & 0.5298              \\
        \bottomrule
    \end{tabular}
    \begin{tablenotes}
        \small
        \item Notes: * indicates that the method cannot process the dataset because the dataset size exceeds the processing limit.
    \end{tablenotes}
\end{table*}

\section{Experimental Evaluation}
\label{sec:experiment}
\subsection{Experiment Setup}

\textbf{Datasets.}
The experiments were conducted using three distinct datasets to demonstrate the versatility and robustness of our method:

\begin{itemize}
    \item Amazon 1000 \cite{ni2019justifying}: Comprising 1000 Amazon reviews; each represented as a 1000-dimensional vector, this dataset is designed to mimic customer behavior analysis.
    \item CLASSIC4 \cite{reddy2021weclustering}: Containing 18000 documents from 20 newsgroups; each document is represented as a 1000-dimensional vector, this dataset is suitable for text analysis and topic discovery.
    \item RCV1-Large \cite{lewis2004rcv1}: A larger dataset used to test the scalability of our method, it includes a vast array of document vectors for high-dimensional data analysis.
\end{itemize}

\textbf{Implementation details.}
All experiments were performed on a computing cluster with the following specifications: Intel Xeon E5-2670 v3 @ 2.30GHz processors, 128GB RAM, and Ubuntu 20.04 LTS operating system. The algorithms were implemented in Rust and compiled with the latest stable version of the Rust compiler.

\textbf{Compared Methods.}
The experiments followed the procedure outlined in Algorithm \ref{alg:method}. The proposed method was compared with the following state-of-the-art co-clustering methods:

\begin{itemize}
    \item Spectral Co-Clustering (SCC) \cite{dhillon2001CoclusteringDocumentsWords}
    \item Parallel Non-negative Matrix Tri-Factorization (PNMTF)\cite{chen2023ParallelNonNegativeMatrix}
    \item Deep Co-Clustering (DeepCC) \cite{dongkuanxu2019DeepCoClustering}
\end{itemize}

Notably, 1) PNMTF is one of the most efficient co-clustering algorithms in the state-of-art. 2) All our experiments show that DeepCC cannot process all selected datasets due to the dataset size exceeds DeepCC processing limit.

\textbf{Our Methods.} Our proposed scalable co-cluster method is applied along with the SCC and PNMTF to demonstrate the enhanced performance and capability of handling large datasets:
\begin{itemize}
    \item Matrix Partitioned and Hierarchical Co-Cluster Merging with Spectral Co-Clustering (LAMC-SCC)
    \item Matrix Partitioned and Hierarchical Co-Cluster Merging with Parallel Non-negative Matrix Tri-Factorization (LAMC-PNMTF)
\end{itemize}

\textbf{Evaluation Metrics.}
The effectiveness of the co-clustering was measured using two widely accepted metrics:

\begin{itemize}
    \item Normalized Mutual Information (NMI): Quantifies the mutual information between the co-clusters obtained and the ground truth, normalized to [0, 1] range, where 1 indicates perfect overlap.
    \item Adjusted Rand Index (ARI): Adjusts the Rand Index for chance, providing a measure of the agreement between two clusters, with values ranging from $-1$ (complete disagreement) to 1 (perfect agreement).
\end{itemize}

\subsection{Results}
The experimental results are presented in Tables \ref{tab:running-time} and \ref{tab:evaluation-metrics}, comparing our methods, LAMC-SCC and LAMC-PNMTF, with traditional methods SCC and PNMTF.

\subsubsection{Handling Large-scale Datasets} The results highlight the limitations of traditional methods like SCC and DeepCC in processing large datasets, as shown by their inability to handle certain datasets (denoted by "*"). This underscores the scalability challenges in existing co-clustering methods.

\subsubsection{Improved Performance} Our methods successfully processed all datasets and significantly outperformed traditional methods in efficiency. For example, the running time for the Amazon 1000 dataset was reduced from 64545.2 seconds (SCC) to 112.5 seconds (LAMC-SCC), demonstrating a substantial increase in speed.

\subsubsection{Quantitative Metrics} As shown in Table \ref{tab:evaluation-metrics}, our methods also improved accuracy and robustness. For instance, in the CLASSIC4 dataset, LAMC-SCC achieved an NMI of 0.7676 and an ARI of 0.5845, outperforming PNMTF.

These experiments validate our proposed scalable co-clustering method as more efficient and capable of handling diverse and large-scale datasets without sacrificing the quality of co-cluster identification. The adaptability of our method to different data characteristics and its capacity for parallel processing demonstrate its potential as a robust tool for applications in domains requiring the analysis of large data matrices, such as text and biomedical data analyses and financial pattern recognition.


\section{Conclusion}
\label{sec:conclude}
This paper introduces a novel, scalable co-clustering method for large matrices, addressing the computational challenges of high-dimensional data analysis. Our method first partitions large matrices into smaller, parallel-processed submatrices, significantly reducing processing time. Next, a hierarchical co-cluster merging algorithm integrates the submatrix results, ensuring accurate and consistent final co-clustering. Extensive evaluations demonstrate that our method outperforms existing solutions in handling large-scale datasets, proving its effectiveness, efficiency, and scalability.




\newpage
\bibliographystyle{IEEEtran}
\bibliography{updated}

\newpage

\appendix

\begin{theorem}
    \label{thm:probability_co_cluster_detection}
    If the matrix $\mathbf{A}$ is partitioned into $m \times n$ blocks, each with sizes $\phi_i \times \psi_j$, and the probability of failing to detect co-cluster $\mathbf{C}_k$ in any block is $P(\omega_k)$, then
    \begin{equation}
        \begin{split}
            P(\omega_k) & \le \exp \left\{ -2 [\phi m (s^{(k)})^2 + \psi n (t^{(k)})^2] \right\}
        \end{split}
    \end{equation}
    Given $T_p$ times of random sampling, the probability of detecting the co-cluster $C_k$ is
    \begin{equation}
        \begin{split}
            P & = 1 - P(\omega_k)^{T_p}                                                        \\
              & \ge 1 - \exp \left\{ -2 T_p [\phi m (s^{(k)})^2 + \psi n (t^{(k)})^2] \right\}
        \end{split}
    \end{equation}

\end{theorem}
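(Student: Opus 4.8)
The plan is to establish the bound on $P(\omega_k)$ by combining Hoeffding's inequality (applied to the random placement of a co-cluster's rows and columns into blocks) with a union bound over all blocks, and then to obtain the sampling bound as an immediate corollary via independence of the $T_p$ draws. I would begin by fixing the co-cluster $\mathbf{C}_k$ of size $M^{(k)} \times N^{(k)}$ and recalling from \eqref{eq:joint_probability} the per-block estimate
\begin{equation}
    P(M_{(i,j)}^{(k)} < T_m,\ N_{(i,j)}^{(k)} < T_n) \le \exp\!\left[-2 (s_i^{(k)})^2 \phi_i - 2 (t_j^{(k)})^2 \psi_j\right],
\end{equation}
which I would justify as follows: the number of $\mathbf{C}_k$-rows landing in block row $i$ is a sum of indicator variables, and the event that fewer than $T_m$ of them appear is a lower-tail deviation of a bounded-difference (hypergeometric-type) random variable, to which Hoeffding's inequality gives the factor $\exp[-2(s_i^{(k)})^2\phi_i]$; the column factor is symmetric, and the two are treated as independent so the joint probability factorizes.

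Next I would pass from a single block to "failure in every block," i.e. the event $\omega_k$ that no block contains a detectable copy of $\mathbf{C}_k$. The key step here is to argue that detection fails in a given block precisely when that block's slice of $\mathbf{C}_k$ is too small in rows or columns, so $P(\omega_k)$ is at worst the product (or, after a union-bound relaxation, a single exponential) over the $m\times n$ blocks of the per-block bounds. Using the uniform minimal sizes $s^{(k)} = \min_i s_i^{(k)}$, $t^{(k)} = \min_j t_j^{(k)}$ and uniform block sizes $\phi = \min_i \phi_i$, $\psi = \min_j \psi_j$, the $m$ row-blocks contribute a total exponent $-2\phi m (s^{(k)})^2$ and the $n$ column-blocks contribute $-2\psi n (t^{(k)})^2$, giving
\begin{equation}
    P(\omega_k) \le \exp\!\left\{-2\left[\phi m (s^{(k)})^2 + \psi n (t^{(k)})^2\right]\right\}.
\end{equation}

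Finally, the $T_p$-sampling statement is routine: under $T_p$ independent random repartitionings the failures are independent, so the probability of failing all $T_p$ times is $P(\omega_k)^{T_p}$, hence the success probability is $P = 1 - P(\omega_k)^{T_p}$, and substituting the bound above together with the fact that $x \mapsto x^{T_p}$ is monotone on $[0,1]$ yields
\begin{equation}
    P \ge 1 - \exp\!\left\{-2 T_p\left[\phi m (s^{(k)})^2 + \psi n (t^{(k)})^2\right]\right\}.
\end{equation}

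The main obstacle I anticipate is the second paragraph: making precise the reduction from "$\mathbf{C}_k$ is not detected in block $B_{(i,j)}$" to the clean event "$M_{(i,j)}^{(k)} < T_m$ or $N_{(i,j)}^{(k)} < T_n$," and justifying that the per-block events can be combined into the stated single exponential. This requires care about whether the bound is a genuine product over independent block-events or a union-bound relaxation, and about the exact form of the concentration inequality (Hoeffding for sampling without replacement, or a Chernoff/McDiarmid variant) that produces the constant $2$ in the exponent; the row/column independence assumption and the replacement of $\phi_i,\psi_j$ by their minima are the places where the inequality is loosened, and I would flag these explicitly rather than claim equality.
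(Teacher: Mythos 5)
Your proposal follows essentially the same route as the paper's proof: per-block Hoeffding-type (hypergeometric, sampling-without-replacement) tail bounds for the row and column counts with $s_i^{(k)}$ and $t_j^{(k)}$, factorization of the joint row/column event, combination over the $m\times n$ blocks under uniform block sizes, and independence of the $T_p$ samplings to get $P = 1 - P(\omega_k)^{T_p}$ and the stated lower bound. The one nuance is that the paper combines blocks by multiplying the per-block bounds (treating block counts as independent), not by any union-bound relaxation — $\omega_k$ is an intersection of failure events, so a union bound is not the applicable device — but your product version coincides with the paper's argument and yields the stated exponent, and the soft spots you flag (row/column independence, cross-block dependence, the constant $2$) are exactly the implicit assumptions in the paper's own derivation.
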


\begin{proof}
    Consider co-cluster $C_k$,
    \begin{equation}
        \begin{split}
            P(M_{(i,j)}^{(k)} = \alpha) & = \frac{\binom{M^{(k)}}{\alpha} \binom{M-M^{(k)}}{\phi_i-\alpha}}{\binom{M}{\phi_i}} \\
            P(N_{(i,j)}^{(k)} = \beta)  & = \frac{\binom{N^{(k)}}{\beta} \binom{N-N^{(k)}}{\psi_j-\beta}}{\binom{N}{\psi_j}}
        \end{split}
    \end{equation}
    The tail probability of $M_{(i,j)}^{(k)}$ and $N_{(i,j)}^{(k)}$ are
    \begin{equation}
        \begin{split}
            P(M_{(i,j)}^{(k)} < T_m) & = \sum_{\alpha=1}^{T_m-1} P(M_{(i,j)}^{(k)} = \alpha) \\
                                     & \le \exp(-2 (s_i^{(k)})^2 \phi_i)
        \end{split}
    \end{equation}
    where $s_i^{(k)} = \cfrac{M^{(k)}}{M}-\cfrac{T_m-1}{\phi_i}$, and
    \begin{equation}
        \begin{split}
            P(N_{(i,j)}^{(k)} < T_n) & = \sum_{\beta=1}^{T_n-1} P(N_{(i,j)}^{(k)} = \beta) \\
                                     & \le \exp (-2 (t_j^{(k)})^2 \psi_j)
        \end{split}
    \end{equation}
    where $t_j^{(k)} = \cfrac{N^{(k)}}{N}-\cfrac{T_n-1}{\psi_j}$.

    The joint probability of $M_{(i,j)}^{(k)}$ and $N_{(i,j)}^{(k)}$ are
    \begin{equation}
        \begin{split}
            P & (M_{(i,j)}^{(k)} < T_m, N_{(i,j)}^{(k)} < T_n)              \\ & = \sum_{\alpha=1}^{T_m-1} \sum_{\beta=1}^{T_n-1} P(M_{(i,j)}^{(k)} = \alpha) P(N_{(i,j)}^{(k)} = \beta) \\
              & \le \exp[-2 (s_i^{(k)})^2 \phi_i + -2 (t_j^{(k)})^2 \psi_j]
        \end{split}
    \end{equation}
    If $\phi_i = p$ and $\psi_j = q$ for all $i$ and $j$, then

    Suppose event $\omega_k$ is that co-cluster $C_k$ can't be find in any block $B_{(i,j)}$, then
    \begin{equation}
        \begin{split}
            P(\omega_k) & = \prod_{i=1}^m \prod_{j=1}^n P(M_{(i,j)}^{(k)} < T_m, N_{(i,j)}^{(k)} < T_n)                          \\
                        & \le \prod_{i=1}^m \prod_{j=1}^n \exp\{-2 \left[ (s_i^{(k)})^2 \phi_i + (t_j^{(k)})^2 \psi_j \right] \} \\
                        & = \exp\{-2 \sum_{i=1}^m \sum_{j=1}^n \left[ (s_i^{(k)})^2 \phi_i + (t_j^{(k)})^2 \psi_j \right] \}     \\
        \end{split}
    \end{equation}

    If $\phi_i = \phi$ and $\psi_j = \psi$ for all $i$ and $j$, then
    \begin{equation}
        \begin{split}
            s_i^{(k)} & = s^{(k)} = \frac{M^{(k)}}{M}-\frac{T_m-1}{\phi} \\
            t_j^{(k)} & = t^{(k)} = \frac{N^{(k)}}{N}-\frac{T_n-1}{\psi}
        \end{split}
    \end{equation}

    \begin{equation}
        \begin{split}
            P(\omega_k) & \le \exp \left\{ -2 [\phi m (s^{(k)})^2 + \psi n (t^{(k)})^2] \right\} \\
        \end{split}
    \end{equation}

    And if we do $T_p$ times of random sampling, the Probability of detecting the co-cluster is
    \begin{equation}
        \begin{split}
            P & = 1 - P(\omega_k)^{T_p}                                                        \\
              & \ge 1 - \exp \left\{ -2 T_p [\phi m (s^{(k)})^2 + \psi n (t^{(k)})^2] \right\} \\
        \end{split}
    \end{equation}
    according to which, we can set $m, n, \phi, \psi, T_m, T_n$ and $T_p$ to ensure the probability of detecting the co-cluster is larger than a given threshold.
\end{proof}

\end{document}